\documentclass[12pt]{article}
\usepackage[latin1]{inputenc}
\usepackage{fullpage}
\usepackage{amsfonts}
\usepackage{latexsym,amssymb}
\usepackage{amsmath}
\usepackage{pstricks,pst-plot,pst-node,pst-text,pst-3d,pst-tree}
\usepackage{natbib}
\usepackage{euscript}

\title{Preferences Yielding the ``Precautionary Effect''\\
}

\author{Michel \textsc{De Lara}\\
Universit\'e Paris-Est, Cermics,\\
6-8 avenue Blaise Pascal, 77455 Marne la Vall\'ee Cedex 2, France}
 
 \newtheorem{theorem}{Theorem}
 \newtheorem{proposition}[theorem]{Proposition}
 
 \newtheorem{corollary}[theorem]{Corollary}

\newenvironment{proof}{\small{\bf Proof.}}{\hfill$\Box$\normalsize
\bigskip}

\def\llower{_0}
\def\upper{_1}

\def\BB{{\mathbb B}}

\newcommand{\RR}{{\mathbb R}} 
\newcommand{\EE}{{\mathbb E}} 
\newcommand{\YY}{{\mathbb Y}} 
\newcommand{\PP}{{\mathbb P}} 

\newcommand{\Om}{\Omega}
\newcommand{\om}{\omega}

\def\calP{\mathcal{P}}

\def\text#1{\quad\mbox{#1}\quad} 
\def\mtext#1{\,\mbox{#1}\,} 
\def\defegal{:=}

\def\util{U}
\def\one{a}

\def\two{b} 
\def\TWO{\mathbb{B}} 
\def\dec{\TWO}

\def\condexpect#1#2#3{\EE_{#1}[ #2 \mid #3 ]}
\def\expect#1#2{\EE_{#1}[ #2 ]}
\def\Expect#1#2{\EE_{#1}\big[ #2 \big]}

\def\signal{y}
\def\Signal{Y}
\def\SIGNAL{\YY}
\def\rv{x}
\def\Rv{X}
\def\RV{{\mathbb X}}

\def\valeur{\mathbb V}

\def\info{\Delta \! \valeur}
\def\argmax{\mathop{\rm arg\,max}}
\def\argmin{\mathop{\rm arg\,min}}
\def\interval{{\mathbb I}}

\def\Matrix{M}

\def\mathscr{\EuScript}
\newcommand{\tribu}[1]{\mathscr{#1}}

\begin{document}
\maketitle
 

\def\aftertwo{%
\psTree{\Tdia{alea} \naput{\red second decision $\two$}}
\Toval{} \Toval{} \Toval{}
\endpsTree
}

\def\scheme{%
 \psTree[treemode=R]{\Tp}
\psTree{\Tn}
\psTree{\Tdia{\red signal} \naput{\red first decision $\one$}}
\psTree[treefit=loose,levelsep=*3cm]{\Tcircle[radius=1]{High}}
\aftertwo
\endpsTree
\psTree[treefit=loose,levelsep=*3cm]{\Tcircle[radius=1]{Low}}
\aftertwo
\endpsTree
\endpsTree
\endpsTree
\endpsTree
}


\def\figone{%
\psovalbox[linecolor=blue,shadow=true,fillstyle=solid,fillcolor=green]
{\begin{tabular}{c}
First decision \\ \red $\one \in \interval \subset \RR $
\end{tabular} }
}

\def\figtwo{%
\psovalbox[linecolor=blue,shadow=true,fillstyle=solid,fillcolor=green]
{\begin{tabular}{c}
Second decision \\ \red  $\two \in \dec(\one) \subset \BB$ 
\end{tabular} }
}

\def\figrv{%
\psovalbox[linecolor=blue,shadow=true,fillstyle=solid,fillcolor=green]
{\begin{tabular}{c}
Uncertainty \\ \red $\rv \in \{\rv_1, \ldots, \rv_m \}$ 
\end{tabular} } 
}

\def\figom{%
\pscirclebox[linecolor=blue,shadow=true]
{\begin{tabular}{c}
Sample space \\ \red $\om \in \Om$
\end{tabular} }
}

\def\figsignal{%
\psovalbox[linecolor=blue,shadow=true]
{\begin{tabular}{c}
Signal \\ \red $\signal \in \{\signal_1, \ldots, \signal_n \}$ 
\end{tabular} } 
}

\def\figutil{%
\psshadowbox[fillstyle=solid,fillcolor=yellow]
{\begin{tabular}{c}
Utility \\ \red $\util(\one,\two,\rv)$ 
\end{tabular} }
}

\def\model{%
\psset{unit=0.5cm,nodesep=1mm}
\rnode{one}{\figone} 
\rput(6,-7){\rnode{util}{\figutil}}
\rnode{two}{\figtwo} 
\rput(-6,10){\rnode{signal}{\figsignal}}
\rnode{rv}{\figrv}
\rput(0,10){\rnode{om}{\figom}}

\ncarc[linecolor=black]{->}{one}{util}
\ncarc[linecolor=black]{->}{rv}{util}
\ncarc[linecolor=black]{->}{two}{util}
\ncarc[linecolor=cyan]{->}{om}{rv}
\ncput*{Random variable $\Rv$}
\ncarc[linecolor=cyan]{->}{om}{signal}
\ncput*{Signal $\Signal$}
\ncarc[linecolor=blue]{->}{signal}{two}
\ncput*{Information}
}


\begin{abstract}
 Consider an agent taking two successive decisions to maximize his
expected utility under uncertainty.  
After his first decision, a signal is revealed that provides information
about the state of nature. 
The observation of the signal allows the decision-maker to revise his
prior and the second decision is taken accordingly. 
Assuming that the first decision is a scalar representing consumption,
the \emph{precautionary effect} holds when initial consumption is less
in the prospect of future information than without (no signal).
\citeauthor{Epstein1980:decision} in \citep*{Epstein1980:decision} 
has provided the most operative tool to exhibit the 
precautionary effect. Epstein's Theorem holds true when the difference
of two convex functions is either convex or concave, which is not a
straightforward property, and which 
is difficult to connect to the primitives of the economic model.
Our main contribution consists in giving a geometric characterization 
of when the difference of two convex functions is convex,
then in relating this to  the primitive utility model.
With this tool, we are able to study and unite a large body of the
literature on the precautionary effect.
\end{abstract}

\begin{quote}
\emph{Key words:} value of information; uncertainty; learning;
precautionary effect; support function. 

\emph{JEL Classification:}  D83
\end{quote}

\section{Introduction}

Consider an agent taking two successive decisions to maximize his
expected utility under uncertainty.  
As illustrated in Figure~\ref{fig:scheme},
\begin{figure}
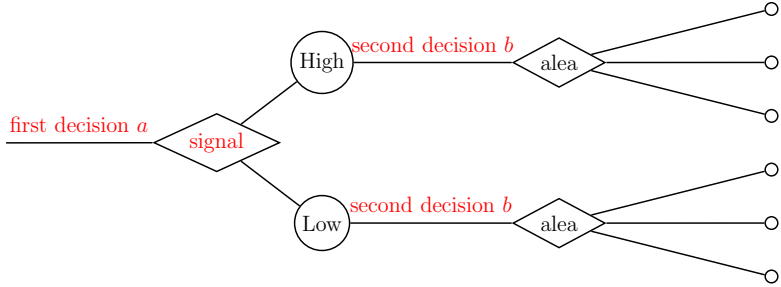

\psscalebox{0.7}{\scheme}
\caption{Decision  with learning; agent takes decision $\one$; 
a signal is revealed; agent takes decision $\two$ accordingly.}
\label{fig:scheme}
\end{figure}
after his first decision, a signal is revealed that provides information
about the state of nature. 
The observation of the signal allows the decision-maker to revise his
prior and the second decision is taken accordingly. 
Assuming that the first decision is a scalar representing consumption,
the \emph{precautionary effect} holds when initial consumption is less
in the prospect of future information than without (no signal).

The example above is a stereotype of sequential decisions problems with
learning were focus is put on comparison of the optimal initial decisions with
different information structures. For instance, should we aim at more
reductions of current greenhouse gases emissions today 
whether or not we assume some future improvement of our 
 knowledge about the climate?
Economic analysis has identified effects that go in opposite
directions and make the conclusion elusive. 
This article proposes a characterization of utility functions such that 
the precautionary effect holds for all signals.

Seminal literature in environmental economics
(\citep*{Arrow.ea1974:environmental},
\citep*{Henry1974:investment,Henry1974:option}) 
 focused on the irreversible environmental consequences carried by the initial
 decision and showed that the possiblity of learning should
lead to less irreversible  current decisions (``irreversibility effect'').
\citeauthor{Henry1974:investment, Arrow.ea1974:environmental}
consider additive separable
preferences, and so do \citep*{Freixas.ea1984:irreversibility}, 
\citep*{Fisher.ea1987:quasi}, 
\citep*{Hanemann1989:information}. 
\citeauthor{Epstein1980:decision} 
in \citep*{Epstein1980:decision} studies a more general
nonseparable expected utility model, and derives a condition that
identifies the direction of the precautionary effect.
His contribution remains the most operative tool.
Yet, the conditions under which this  result  holds
are difficult to connect to the primitive utility model.
Further contributions have insisted on the
existence of an opposite  economic irreversibility since environmental
precaution imply sunk costs that may constrain future consumption
 (\citep*{Kolstad1996:fundamental},
\citep*{Pindyck2000:irreversibilities}, 
\citep*{Fisher.ea2003:global}).
Risk neutral preferences are studied in \citep*{Ulph.ea1997:global}
for a global warming model. 
Assuming time separability of preferences, the papers 
\citep*{Gollier.ea2000:scientific} and
\citep*{Eeckhoudt-Gollier-Treich:2005} 
examine risk averse preferences.
\citeauthor*{Gollier.ea2000:scientific} identified  conditions on the
second-period utility function
 for the possibility of learning to have a precautionary
effect with and alternatively without the irreversibility constraint.
By this latter, we mean that the domain of the second decision variable
depends on the first decision.

The driving idea for linking the effect of learning and the value of
information is the observation that, once an initial decision is made,
 the value of information  can be defined as a \emph{function} of that
 decision. This is the approach of \citeauthor{Jones-Ostroy:1984} 
who define the value of information in this
way in their paper \citep*{Jones-Ostroy:1984} where they 
formalize the notion of
flexibility in a sequential decision context, and relate its value to
the amount of information an agent expects to receive.
Whenever the \emph{second-period value of information} -- namely 
the value of information  measured after an initial commitment is made
-- is a monotone function of the initial decision,
optimal initial decisions can be ranked. 
All this is recalled in
Sect.~\ref{sec:The_precautionary_effect:_statement_and_recalls},
where we also extend the approach in \citep*{Epstein1980:decision} and
\citep*{Jones-Ostroy:1984} to non necessarily finite sets.

In \citep*{Jones-Ostroy:1984}, the monotonicity property of
the second-period value of information is related to
convexity in the prior of a difference of maximal payoffs.
This is convexity is far from being granted since the difference of two
convex functions is generally not convex. 
Our main result consist in giving a general condition under which 
\emph{a difference of maximal payoffs exhibits convexity in the prior}. 
This is the object of Sect.~\ref{sec:primitive} where, first, we provide
a geometric characterization and, second, carry it to the utility model.
With this tool, we are able to study in Sect.~\ref{sec:Analysis of
  examples in the literature} a large body of the literature on
the precautionary effect.

\section{The precautionary effect: statement and recalls}
\label{sec:The_precautionary_effect:_statement_and_recalls}

We first give a formal statement of the precautionary effect, then
recall and extend some results in the literature, upon which we shall
elaborate our main contribution in the next section.

We shall assume that all sets are Borel spaces, 
endowed with the Borel $\sigma$-algebra generated by the
open subsets, that 
all mappings have appropriate measurability and integrability
properties needed to perform mathematical expectations operations,
and that all probabilities are regular to ensure the existence of
conditional distributions \citep*{Bertsekas-Shreve:1996},
\citep*{Kallenberg:2002}. 

\subsection{Problem statement}

Consider an agent taking two successive decisions as in
Figure~\ref{fig:scheme}.
The initial decision $\one$ is a scalar belonging to an interval
$\interval$ of $\RR$; 
the following and final decision $\two$ belongs to a set 
$\dec(\one)$ which may depend on the initial decision\footnote{%
This may materialize ``irreversibility'' of the initial decision.}
$\one$ ($\dec(\one)$ is a subset of a fixed set $\BB$).
Uncertainty is represented by states of nature $\om \in \Om$ with
prior $\PP$ on the Borel $\sigma$-field $\tribu{F}$, 
and by a random variable $\Rv : (\Om,\tribu{F}) \to (\RV,\tribu{\Rv})$.
Partial information on $\Rv$ is provided by means of a 
signal (random variable) 
$\Signal : (\Om,\tribu{F}) \to (\SIGNAL,\tribu{\Signal})$.
A utility function $\util(\one,\two,\rv)$ is given, defined on 
$ \interval \times \BB \times \Om$.
The expected utility maximizer solves\footnote{%
We shall always assume that, for the problems we  consider, the
\emph{sup} is attained and we shall use the notation \emph{max}. 
}
\begin{equation}
\max_{\one \in \interval} \Expect{}{ \max_{\two \in \dec(\one)}
\condexpect{}{ \util\big( \one,\two,\Rv \big) }{\Signal} } \; .
\end{equation}
Thus, the second decision $\two$ is taken knowing $\Signal$. 

The evaluation of expected utility right after the first decision $\one$
has been taken is conditional on the signal $\Signal$
and defined as follows:
\begin{equation}
\valeur^\Signal(\one) \defegal \Expect{}{ \max_{\two \in \dec(\one)}
\condexpect{}{ \util\big( \one,\two,\Rv \big) }{\Signal} } \; . 
\label{eq:valuesignal}
\end{equation}
With this notation, the program of the $\Signal$-informed agent is 
$\max_{\one} \valeur^\Signal(\one)$. 
Let us assume, for the sake of simplicity, 
that an optimal solution exists and is unique, 
denoted by $\bar{\one}^\Signal$.

A signal $\Signal$ is said to be \emph{more informative} than a
signal $\Signal'$ if the $\sigma$-field 
$\sigma(\Signal) \defegal \Signal^{-1}(\tribu{\Signal})$ contains 
$\sigma(\Signal')$. It is equivalent to say that $\Signal'$ is a
measurable function of $\Signal$, namely $\Signal'=f(\Signal)$ where 
$f : (\SIGNAL,\tribu{\Signal}) \to (\SIGNAL',\tribu{\Signal}') $.

The \emph{precautionary effect} is said to hold whenever 
the optimal initial decision is lower with more information that is,
if when the signal $\Signal$ is more informative than the signal $\Signal'$,
then $\bar{\one}^\Signal \leq \bar{\one}^{\Signal'}$. 

\subsection{Precautionary effect and second-period value of the
  information monotonicity} 

We shall recall a sufficient condition under which 
the presence of learning affects the first optimal decision in a
predictable way (see \citep*{Jones-Ostroy:1984}, 
\citep*[p.229]{DeLara-Doyen:2008}, \citep*{DeLara-Gilotte:2009}). 

Let us compare the programs of the $\Signal$-informed and
$\Signal'$-informed agent by writing
\begin{equation}
\max_{\one} \valeur^\Signal(\one) = 
 \max_{\one} \{ \valeur^{\Signal'}(\one) +
\big( \valeur^\Signal(\one) - \valeur^{\Signal'}(\one) \big) \} \; .
\end{equation}
It appears that the decision maker who expects more information
optimizes the same objective as the less informed decision maker \emph{plus} 
what we shall coin the \emph{second-period value of information} 
\begin{equation}
 \info^{\Signal \Signal'}(\one) \defegal 
\valeur^\Signal(\one) - \valeur^{\Signal'}(\one) 
\end{equation}
which depends on his initial decision. The more-informed agent 
initial optimal decision achieves a trade-off: 
it can be suboptimal from the point of
view of the less-informed decision maker but compensates for this by  an
increase of the second-period value of information.  

\begin{proposition}
Assume that the programs $\max_{\one} \valeur^\Signal(\one)$ and
$\max_{\one} \valeur^{\Signal'}(\one)$ have unique\footnote{%
Unicity is for the sake of simplicity.
If the programs $\max_{\one} \valeur^\Signal(\one)$ and
$\max_{\one} \valeur^{\Signal'}(\one)$ do not have unique optimal solutions,
denote by $\argmax_{\one}  \valeur^\Signal(\one)$ and 
$\argmax_{\one}  \valeur^{\Signal'}(\one)$ the sets of maximizers. 
If $\info^\Signal(\one)$ is decreasing, the upper bounds of these sets
can be ranked: $ \sup \argmax_{\one}  \valeur^\Signal(\one)
\leq \sup \argmax_{\one}  \valeur^{\Signal'}(\one)$.
If $\info^\Signal(\one)$ is strictly decreasing, we obtain that 
$ \sup \argmax_{\one}  \valeur^\Signal(\one)
\leq \inf \argmax_{\one}  \valeur^{\Signal'}(\one)$ that is,
$  \bar{\one}^\Signal \leq \bar{\one}^{\Signal'}  $ for any 
$ \bar{\one}^\Signal \in \argmax_{\one}  \valeur^\Signal(\one)$ and
any $ \bar{\one}^{\Signal'} \in \argmax_{\one}  \valeur^{\Signal'}(\one)$.
}
 optimal solutions 
$\bar{\one}^\Signal$ and  $\bar{\one}^{\Signal'}$. 
Whenever the second-period value of the information is a 
decreasing function of the initial decision, namely
\begin{equation}
\info^{\Signal  \Signal'} :  \one \mapsto \valeur^\Signal(\one) - 
\valeur^{\Signal'}(\one) \mtext{ is decreasing,} 
\end{equation}
then
\begin{equation} 
 \bar{\one}^\Signal \leq \bar{\one}^{\Signal'} \; .
\end{equation}
\label{pr:main}
\end{proposition}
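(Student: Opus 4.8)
The plan is to prove the statement by a short argument by contradiction, relying only on a revealed-preference (exchange) inequality; no continuity or differentiability of the value functions $\valeur^\Signal$ and $\valeur^{\Signal'}$ is needed, so the whole argument is purely order-theoretic.

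First I would suppose, contrary to the claim, that $\bar{\one}^\Signal > \bar{\one}^{\Signal'}$. From the optimality of each maximizer \emph{together with} its uniqueness I would extract two \emph{strict} inequalities: since $\bar{\one}^\Signal$ is the unique maximizer of $\valeur^\Signal$ and $\bar{\one}^{\Signal'} \neq \bar{\one}^\Signal$, one has $\valeur^\Signal(\bar{\one}^\Signal) > \valeur^\Signal(\bar{\one}^{\Signal'})$; symmetrically, because $\bar{\one}^{\Signal'}$ is the unique maximizer of $\valeur^{\Signal'}$, one has $\valeur^{\Signal'}(\bar{\one}^{\Signal'}) > \valeur^{\Signal'}(\bar{\one}^\Signal)$.

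I would then add these two inequalities and cancel the common terms, grouping the $\valeur^\Signal$ and $\valeur^{\Signal'}$ values evaluated at the same point, which gives
\begin{equation*}
\valeur^\Signal(\bar{\one}^\Signal) - \valeur^{\Signal'}(\bar{\one}^\Signal)
> \valeur^\Signal(\bar{\one}^{\Signal'}) - \valeur^{\Signal'}(\bar{\one}^{\Signal'}) ,
\end{equation*}
that is $\info^{\Signal\Signal'}(\bar{\one}^\Signal) > \info^{\Signal\Signal'}(\bar{\one}^{\Signal'})$: the second-period value of information is strictly larger at the larger point $\bar{\one}^\Signal$. This contradicts the hypothesis that $\info^{\Signal\Signal'}$ is decreasing, which together with $\bar{\one}^\Signal > \bar{\one}^{\Signal'}$ forces $\info^{\Signal\Signal'}(\bar{\one}^\Signal) \leq \info^{\Signal\Signal'}(\bar{\one}^{\Signal'})$. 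Hence the assumption $\bar{\one}^\Signal > \bar{\one}^{\Signal'}$ is untenable and $\bar{\one}^\Signal \leq \bar{\one}^{\Signal'}$.

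The only real subtlety, and where I would be careful, is the interplay between strictness and the merely weak monotonicity assumed: the exchange inequality must be produced as a \emph{strict} inequality, which is exactly what uniqueness of the two maximizers supplies, so that a merely (weakly) decreasing $\info^{\Signal\Signal'}$ already closes the contradiction. This also clarifies the footnote: if uniqueness is dropped one can no longer guarantee strictness from optimality alone, and one must instead assume $\info^{\Signal\Signal'}$ strictly decreasing (or settle for comparing the suprema of the $\argmax$ sets) to obtain the ranking.
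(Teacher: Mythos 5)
Your proof is correct. Note that the paper itself gives no proof of Proposition~\ref{pr:main}: it is presented as a recalled result, with pointers to \citep*{Jones-Ostroy:1984} and \citep*[p.229]{DeLara-Doyen:2008}, so there is nothing in the text to compare against line by line; your exchange (revealed-preference) argument is exactly the standard proof of such comparative-statics statements. The one point that genuinely needed care --- that uniqueness of the two maximizers yields the \emph{strict} inequality $\info^{\Signal\Signal'}(\bar{\one}^\Signal) > \info^{\Signal\Signal'}(\bar{\one}^{\Signal'})$, so that merely weak decreasingness of $\info^{\Signal\Signal'}$ suffices for the contradiction with $\bar{\one}^\Signal > \bar{\one}^{\Signal'}$ --- is handled correctly, and your closing remark matches the proposition's footnote: once uniqueness is dropped, strictness must instead come from $\info^{\Signal\Signal'}$ itself (strictly decreasing), or one settles for ranking the suprema of the $\argmax$ sets.
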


In the case where $\Signal'$ is constant (no information since 
$\sigma(\Signal')$ is the trivial $\sigma$-field 
$\{ \emptyset, \Om \}$),  the effect
of learning is  \emph{precautionary} in the sense that the optimal
initial decision is lower with information than without.

\subsection{Second-period value of information and Epstein functional}

We extend the approach of \citeauthor{Epstein1980:decision} 
in \citep*{Epstein1980:decision} and
\citep*{Jones-Ostroy:1984} to non necessarily finite sets.
We denote by $\calP(\RV)$ the Borel space of probability measures on  $\RV$,
with its Borel $\sigma$-field; the same holds for $\calP(\SIGNAL)$.

Following Epstein, let us define what we shall coin the 
\emph{Epstein functional} by\footnote{%
We denote by $\util(\one,\two,\cdot)$ the mapping $\RV \to \RR$ given by 
$\rv \mapsto \util(\one,\two,\rv)$.
}:
\begin{equation}
J(\one,\rho) \defegal 
\sup_{\two \in \dec(\one)}
\Expect{\rho}{ \util\big( \one,\two,\cdot \big) } =
\sup_{\two \in \dec(\one)} \int_\RV \util\big( \one,\two,\rv \big)
d\rho(\rv) \; , \quad \forall \rho \in \calP(\RV) \; .
\label{eq:Epstein_functional}
\end{equation}

Denote by $\nu$ and $\nu'$ the unconditional distributions 
of the signals $\Signal$ and $\Signal'$ on their image set $\SIGNAL$. 
The conditional distribution of $\Rv$ knowing $\Signal$ is a mapping
$\PP^{\Signal}_\Rv : \SIGNAL \to \calP(\RV)$.

The following result may be found in \citep*{Jones-Ostroy:1984}.
We give its proof in the general case of non necessarily finite sets.
\begin{proposition}
Assume that
\begin{enumerate}
\item for any pair of initial decisions $\one\upper \geq \one\llower$, 
$\rho \in \calP(\RV) \mapsto 
J(\one\upper,\rho) - J(\one\llower,\rho)$ is convex (resp. concave),
\item $\Signal$ is more informative than $\Signal'$.
\end{enumerate}
Then, the value 
$\info^{\Signal \Signal'}(\one) =
\valeur^\Signal(\one) - \valeur^{\Signal'}(\one)$
of substituting $\Signal$ for $\Signal'$, 
is increasing (resp. decreasing) with initial decision 
$\one \in \interval$.
Hence, Proposition~\ref{pr:main} applies.
\label{pr:Epstein_functional}
\end{proposition}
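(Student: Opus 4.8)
The plan is to represent the second-period value $\valeur^\Signal$ through the Epstein functional $J$, to exploit the Bayesian relation between the posteriors attached to $\Signal$ and to the coarser signal $\Signal'$, and to conclude by Jensen's inequality for the map $\rho\mapsto J(\one\upper,\rho)-J(\one\llower,\rho)$. Note first that, for fixed $\one$, the functional $J(\one,\cdot)$ is a supremum of the affine maps $\rho\mapsto\int_\RV\util(\one,\two,\rv)\,d\rho(\rv)$, hence is convex on $\calP(\RV)$, so that assumption~1 is exactly the requirement that a difference of two convex functions be again convex. I would then rewrite $\valeur^\Signal$ using the conditional distribution $\PP^\Signal_\Rv$: disintegrating $\Rv$ along $\Signal$, the inner conditional expectation is the integral of $\util(\one,\two,\cdot)$ against the posterior $\PP^\Signal_\Rv(\signal)\in\calP(\RV)$, so that by the definition~\eqref{eq:Epstein_functional} of $J$, for $\nu$-almost every signal value $\signal$,
\begin{equation*}
\max_{\two \in \dec(\one)} \condexpect{}{\util(\one,\two,\Rv)}{\Signal=\signal}
= \sup_{\two \in \dec(\one)} \Expect{\PP^\Signal_\Rv(\signal)}{\util(\one,\two,\cdot)}
= J\big(\one,\PP^\Signal_\Rv(\signal)\big).
\end{equation*}
Integrating against the law $\nu$ of $\Signal$ gives $\valeur^\Signal(\one)=\expect{\nu}{J(\one,\PP^\Signal_\Rv)}$, and similarly $\valeur^{\Signal'}(\one)=\expect{\nu'}{J(\one,\PP^{\Signal'}_\Rv)}$.

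The key structural step is the consistency between the two posteriors. Because $\Signal$ is more informative than $\Signal'$, we have $\Signal'=f(\Signal)$ with $\sigma(\Signal')\subset\sigma(\Signal)$, and the tower property for conditional distributions expresses the coarser posterior as the conditional expectation, given $\Signal'$, of the finer one, both viewed as $\calP(\RV)$-valued random elements:
\begin{equation*}
\PP^{\Signal'}_\Rv(\Signal') = \condexpect{}{\PP^\Signal_\Rv(\Signal)}{\Signal'}.
\end{equation*}
Here $\calP(\RV)$ is regarded as a convex subset of the vector space of finite signed measures on $\RV$, the conditional expectation being taken against every bounded measurable test function.

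To conclude, I would fix $\one\upper\geq\one\llower$ and set $g\defegal J(\one\upper,\cdot)-J(\one\llower,\cdot)$, convex by assumption~1. The representations above give
\begin{equation*}
\info^{\Signal \Signal'}(\one\upper)-\info^{\Signal \Signal'}(\one\llower)
= \expect{\nu}{g(\PP^\Signal_\Rv)} - \expect{\nu'}{g(\PP^{\Signal'}_\Rv)}.
\end{equation*}
Applying the conditional Jensen inequality to the convex $g$ and the random element of the previous step yields pointwise $g(\condexpect{}{\PP^\Signal_\Rv(\Signal)}{\Signal'})\leq\condexpect{}{g(\PP^\Signal_\Rv(\Signal))}{\Signal'}$; taking total expectation and using the tower property gives $\expect{\nu'}{g(\PP^{\Signal'}_\Rv)}\leq\expect{\nu}{g(\PP^\Signal_\Rv)}$. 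Hence $\info^{\Signal \Signal'}(\one\upper)-\info^{\Signal \Signal'}(\one\llower)\geq 0$, i.e.\ $\info^{\Signal \Signal'}$ is increasing. The concave case is obtained verbatim with all inequalities reversed, producing a decreasing $\info^{\Signal \Signal'}$, to which Proposition~\ref{pr:main} then applies.

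The main obstacle is not this algebra but the rigorous handling of the measure-valued objects in the general, non-finite setting: justifying the disintegration that yields $J(\one,\PP^\Signal_\Rv(\signal))$, checking that $g$ is genuinely convex on the convex set $\calP(\RV)$, and invoking a conditional Jensen inequality for a convex function of a random element valued in this infinite-dimensional convex set under suitable integrability. These are precisely the technical points covered by the blanket Borel-space, regularity and integrability assumptions stated in Section~\ref{sec:The_precautionary_effect:_statement_and_recalls}.
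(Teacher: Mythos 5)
Your proof is correct and follows essentially the same route as the paper: the same representation $\valeur^\Signal(\one)=\expect{\nu}{J(\one,\PP^{\Signal}_\Rv)}$ via posteriors, and the same reduction of the monotonicity of $\info^{\Signal\Signal'}$ to the comparison $\expect{\nu}{\varphi(\PP^{\Signal}_\Rv)} \geq \expect{\nu'}{\varphi(\PP^{\Signal'}_\Rv)}$ for convex $\varphi$ on $\calP(\RV)$. The one genuine difference lies in how this comparison is justified: the paper invokes it as a known result (citing Dellacherie--Meyer, Artstein--Wets, Artstein), whereas you prove it, observing that the coarser posterior is the conditional expectation of the finer one (the martingale property of posteriors, valid since $\sigma(\Signal')\subset\sigma(\Signal)$) and then applying conditional Jensen. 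This buys self-containedness --- it is in essence the argument behind the cited references --- but it also transfers to you the burden of making conditional Jensen rigorous for a $\calP(\RV)$-valued random element: this needs $\varphi$ to be not merely convex but, say, lower semicontinuous (a supremum of continuous affine functionals), and a difference of two support functions that happens to be convex is not automatically lower semicontinuous. You correctly flag this as the remaining technical point; the paper avoids it entirely by black-boxing the inequality into the citation.
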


\begin{proof}
We have: 
\begin{eqnarray*} 
\valeur^\Signal(\one) &=& \Expect{\PP}{ \max_{\two \in \dec(\one)}
\condexpect{\PP}{ \util\big( \one,\two,\Rv \big) }{\Signal} } 
\mtext{ by definition~\eqref{eq:valuesignal}}\\
&=& \Expect{\nu}{ \max_{\two \in \dec(\one)}
\expect{\PP^{\Signal}_\Rv}{ \util\big( \one,\two,\cdot \big) } }  
\mtext{ by using the conditional distribution} \\
&=& \Expect{\nu}{ J(\one,\PP^{\Signal}_\Rv) }  
\mtext{ by~\eqref{eq:Epstein_functional}.} 
\end{eqnarray*}
Let $\one\upper \geq \one\llower$, and
suppose that  $\varphi(\rho)=J(\one\upper,\rho) - J(\one\llower,\rho)$ is
convex in $\rho \in \calP(\RV)$. 
We have
\begin{eqnarray*}
 \info^{\Signal \Signal'}(\one\upper) -
 \info^{\Signal \Signal'}(\one\llower) &=& 
\Expect{\nu}{ J(\one\upper,\PP^{\Signal}_\Rv) -
J(\one\llower,\PP^{\Signal}_\Rv) } - 
\Expect{\nu'}{ J(\one\upper,\PP^{\Signal'}_\Rv) -
J(\one\llower,\PP^{\Signal'}_\Rv) } \\ 
&=& 
\Expect{\nu}{ \varphi(\PP^{\Signal}_\Rv) } -
\Expect{\nu'}{ \varphi(\PP^{\Signal'}_\Rv) } \geq 0 
\end{eqnarray*}
since $\Signal$ is more informative than $\Signal'$.
Indeed, it is known (see \citep*{Dellach},
\citep*{Artstein-Wets:1993}, 
\citep*{Artstein1999:gains})
that if $\Signal$ is more informative than $\Signal'$, then
$ \expect{\nu}{\varphi(\PP^{\Signal}_\Rv)} \geq 
 \expect{\nu'}{\varphi(\PP^{\Signal'}_\Rv)}$ for all convex function
$\varphi$ on $\calP(\RV)$.\footnote{%
Notice that $\expect{\nu}{}$ denotes a mathematical expectation taken on
the probability space $(\SIGNAL,\tribu{\Signal},\nu)$.
}
The converse holds for concave.
\end{proof}

\section{Conditions on the primitive utility for the precautionary effect}
\label{sec:primitive}

\citeauthor{Epstein1980:decision}'s condition for the precautionary
effect in~\citep*{Epstein1980:decision} relies upon 
convexity (or concavity) of $ \rho \mapsto 
\frac{\partial J}{\partial \one}(\bar{\one}^\Signal,\rho)$.
In the same vein, \citeauthor{Jones-Ostroy:1984} rely upon 
convexity of the mapping $\rho \in \calP(\RV) \mapsto 
J(\one\upper,\rho) - J(\one\llower,\rho)$ in \citep*{Jones-Ostroy:1984}.
With the expression~\eqref{eq:Epstein_functional} of
$J(\one,\rho)$, it is not easy to see how this
relates to the primitive of the model, namely the utility $\util$. 

We shall proceed in two steps to characterize utility functions such
that $\rho \in \calP(\RV) \mapsto 
J(\one\upper,\rho) - J(\one\llower,\rho)$ is convex.
The difficulty comes from the fact that this latter function is the
difference of two convex functions, hence has no reason to be convex.
We shall, first, provide
a geometric characterization and, second, carry it functionally 
to the utility model by means of so-called \emph{support functions}.

\subsection{A geometric characterization}

The following characterization of when 
$\rho \in \calP(\RV) \mapsto J(\one\upper,\rho) - J(\one\llower,\rho)$ 
is convex relies upon the notion of sum of subsets of a vector space. 
Recall that, for any subsets $\Lambda_1$ and $\Lambda_2$ of a vector space, 
$\Lambda_1 + \Lambda_2 = \{ x_1 + x_2 \; , \, x_1 \in \Lambda_1
\mtext{ and } x_2 \in \Lambda_2 \}$ 
is their so called direct sum, or \emph{Minkowsky} sum.

Let us define, for any initial decision $\one \in \interval$,
\begin{equation}
 \Lambda^-(\one) \defegal \{ f : \RV \to \RR \mid
\mtext{ there exists } \two \in \dec(\one) \mtext{ such that }
f(\rv) \leq \util(\one,\two,\rv) \; , \quad \forall \rv \in \RV \}
\end{equation}
the set of maximal possible random rewards when the initial decision is
$\one$. 
Our first main result is the following. 

\begin{proposition}
Let $\one\upper > \one\llower$.
If 
there exists a subset $K$ 
of functions defined on  $\RV$ such that
\begin{equation}
\Lambda^-(\one\upper) = \Lambda^-(\one\llower) + K \; ,
\label{eq:geometric_characterization}
\end{equation}
then
$\rho \in \calP(\RV) \mapsto J(\one\upper,\rho) - J(\one\llower,\rho)$ is
convex.
Hence, the first hypothesis of Proposition~\ref{pr:Epstein_functional}
is satisfied.
\label{pr:sum}
\end{proposition}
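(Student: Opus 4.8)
The plan is to recognize that, for each fixed initial decision, $J(\one,\cdot)$ is the \emph{support function} of the set $\Lambda^-(\one)$, and then to exploit the fact that support functions add under Minkowski sums. First I would establish the representation
\[
J(\one,\rho)=\sup_{f\in\Lambda^-(\one)}\int_\RV f(\rv)\,d\rho(\rv)\;,
\qquad\forall\rho\in\calP(\RV)\;.
\]
The inequality $\geq$ is immediate since $\util(\one,\two,\cdot)\in\Lambda^-(\one)$ for every $\two\in\dec(\one)$ (take $f=\util(\one,\two,\cdot)$), so the supremum over $\Lambda^-(\one)$ dominates the supremum over $\two$ defining $J$ in~\eqref{eq:Epstein_functional}. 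For $\leq$, any $f\in\Lambda^-(\one)$ satisfies $f(\rv)\leq\util(\one,\two,\rv)$ pointwise for some $\two\in\dec(\one)$; integrating against the nonnegative measure $\rho$ gives $\int_\RV f\,d\rho\leq\int_\RV\util(\one,\two,\cdot)\,d\rho\leq J(\one,\rho)$, and taking the supremum over $f$ yields the reverse inequality. Thus $J(\one,\cdot)$ is exactly the support function of $\Lambda^-(\one)$ in the duality between functions on $\RV$ and measures $\rho$.

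Next I would use the additivity of support functions under Minkowski sums. For any two subsets $\Lambda_1,\Lambda_2$ of functions on $\RV$, every element of $\Lambda_1+\Lambda_2$ decomposes as $f_1+f_2$, and linearity of the integral in its integrand gives
\[
\sup_{f\in\Lambda_1+\Lambda_2}\int_\RV f\,d\rho
=\sup_{f_1\in\Lambda_1}\int_\RV f_1\,d\rho
+\sup_{f_2\in\Lambda_2}\int_\RV f_2\,d\rho\;.
\]
Applying this identity to the hypothesis $\Lambda^-(\one\upper)=\Lambda^-(\one\llower)+K$, together with the support-function representation of the previous step, yields for every $\rho\in\calP(\RV)$
\[
J(\one\upper,\rho)-J(\one\llower,\rho)
=\sup_{f\in K}\int_\RV f(\rv)\,d\rho(\rv)\;.
\]
The difference of maximal payoffs is therefore itself a support function.

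It then remains only to observe that any support function is convex. Indeed, for $\rho=t\rho'+(1-t)\rho''$ with $t\in[0,1]$, the integral is affine in $\rho$, so for each $f\in K$ one has $\int_\RV f\,d\rho=t\int_\RV f\,d\rho'+(1-t)\int_\RV f\,d\rho''$; taking the supremum over $f\in K$ and using subadditivity of the supremum gives
\[
\sup_{f\in K}\int_\RV f\,d\rho
\leq t\sup_{f\in K}\int_\RV f\,d\rho'
+(1-t)\sup_{f\in K}\int_\RV f\,d\rho''\;.
\]
This is precisely the convexity in $\rho$ of $J(\one\upper,\rho)-J(\one\llower,\rho)$, which is the first hypothesis of Proposition~\ref{pr:Epstein_functional}, so that proposition applies.

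The main obstacle I anticipate lies entirely in the first step: I must be careful about integrability and finiteness so that the support-function identity and its additivity are legitimate, i.e.\ that the suprema involved are genuinely attained and finite rather than $+\infty$. This is exactly what is guaranteed by the standing measurability and integrability assumptions and by the convention that the relevant suprema are attained; once that is secured, the algebra of support functions does all the work and no convexity of $\util$ in $\rv$ is ever needed.
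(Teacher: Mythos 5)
Your proof is correct and takes essentially the same route as the paper: identify $J(\one,\cdot)$ as the support function of $\Lambda^-(\one)$, use the additivity of support functions under Minkowski sums to get $J(\one\upper,\rho)-J(\one\llower,\rho)=\sup_{f\in K}\int_\RV f\,d\rho$, and conclude by convexity of support functions. The only difference is one of presentation: where the paper first writes $J(\one,\cdot)=\sigma_{\Lambda(\one)}$ and then invokes a polar-cone argument (citing Aubin) to pass to $\sigma_{\Lambda^-(\one)}$, you prove the identity $J(\one,\rho)=\sup_{f\in\Lambda^-(\one)}\int_\RV f\,d\rho$ directly by a two-sided inequality using the positivity of $\rho$, which makes the argument self-contained.
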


\begin{proof}
 The proof comes from the observation that the 
Epstein functional $J(\one,\rho)$ is a so-called 
\emph{support function}\footnote{%
It is well known that an expected utility is convex in the prior.
It seems less noticed that the expected utility, seen as a
function of the prior, is the support
function of the set of payoffs indexed by actions.
}
as a function of its argument $\rho$. 
Recall that, to any set $ \Lambda $ of bounded measurable functions 
\( f : \RV \to \RR  \) is attached 
the support function \( \sigma_{\Lambda} \), defined on the 
Banach space of finite signed measures on  $\RV$, by:
\begin{equation}
\sigma_{\Lambda}(\rho) \defegal 
\sup_{\lambda \in \Lambda} \int_{\RV} \lambda(\rv) d\rho(\rv) \; .
\end{equation}
Indeed,
\eqref{eq:Epstein_functional} may be written as
\begin{equation}
 J(\one,\rho) = \sup_{\two \in \dec(\one)}
\Expect{\rho}{ \util\big( \one,\two,\cdot \big) } = 
\sup_{\lambda \in \Lambda(\one)} \Expect{\rho}{ \lambda } =
\sigma_{\Lambda(\one)}(\rho) \; , 
\end{equation}
where 
\begin{equation}
\Lambda(\one)\defegal \{ f : \RV \to \RR \mid
\mtext{ there exists } \two \in \dec(\one) \mtext{ such that }
f(\rv)= \util(\one,\two,\rv) \; , \quad \forall \rv \in \RV \} \; .
\label{eq:Lambda}
\end{equation}
Now, since $\rho$ belongs to cone of positive measures 
($\rho \in \calP(\RV)$), we also have
that $ J(\one,\rho) = \sigma_{\Lambda(\one)}(\rho) 
= \sigma_{\Lambda^-(\one)}(\rho) $, because
the polar cone of nonnegative functions on $\RV$ is the set of measures
\citep*[p.31,p.107]{Aubin:1982}.

Support functions have the nice property to transform a
Minkowsky sum into a sum of functions \citep*{Aubin:1982}:
\begin{equation}
\sigma_{\Lambda_1 + \Lambda_2} = \sigma_{\Lambda_1} + \sigma_{\Lambda_1}
\; . 
\end{equation}
Thus, whenever $\Lambda^-(\one\upper) = \Lambda^-(\one\llower) + K $, we
obtain that 
\begin{eqnarray*}
 J(\one\upper,\rho) - J(\one\llower,\rho) &=&  
\sigma_{\Lambda^-(\one\upper)}(\rho) - 
\sigma_{\Lambda^-(\one\llower)}(\rho) \\
&=& \sigma_{\Lambda^-(\one\llower) + K }(\rho) - 
\sigma_{\Lambda^-(\one\llower)}(\rho) \\
&=& \sigma_{\Lambda^-(\one\llower) }(\rho) + \sigma_{K }(\rho) - 
\sigma_{\Lambda^-(\one\llower)}(\rho) = \sigma_{K }(\rho) \; .
\end{eqnarray*}
Hence, $\rho \mapsto J(\one\upper,\rho) - J(\one\llower,\rho)
= \sigma_{K }(\rho)$ is convex since it is a support
function.\footnote{%
The condition $\Lambda^-(\one\upper) = \Lambda^-(\one\llower) + K $ is 
almost necessary for $J(\one\upper,\rho) - J(\one\llower,\rho)$ 
to be convex in $\rho \in \calP(\RV)$.
Indeed, \citep*[p.~92-93]{Hiriart-Urrurty-Lemarechal-I:1993} states that
the closed convex hull 
$\overline{co} (\sigma_{\Lambda_2} - \sigma_{\Lambda_1}) $ is equal to
$ \sigma_{\Lambda_2 \stackrel{\star}{-} \Lambda_1} $,
where the \emph{star-difference} 
$\Lambda_2 \stackrel{\star}{-} \Lambda_1 \defegal 
\{ x \mid x + \Lambda_1 \subset \Lambda_2 \} $.
Recall that the \emph{closed convex hull} $\overline{co} f$
of a function $f$ minorized by an affine function is the largest closed
convex function minorizing $f$.
}
\end{proof}

\subsection{Characterization of utility functions ensuring the precautionary effect}

We shall now show how the above geometric
characterization~\eqref{eq:geometric_characterization} translates
into a condition on the utility function $\util$. 

Consider two initial decisions $\one\upper > \one\llower$.
To any mapping $\phi : \TWO(\one\llower) \to \TWO(\one\upper) $
between second decision sets,
associate the following set of minimizers
\begin{equation}
 \dec_{\phi}(\one\upper,\one\llower,\rv) \defegal 
\argmin_{\two \in \dec(\one\llower)} \Big( 
\util(\one\upper,\phi(\two), \rv) - \util(\one\llower,\two, \rv) \Big) 
\; , \quad \forall \rv \in \RV \; .
\label{eq:minimizers}
\end{equation}
and
\begin{equation}
\dec_{\phi}(\one\upper,\one\llower) \defegal 
\bigcap_{\rv \in \RV} \dec_{\phi}(\one\upper,\one\llower,\rv) \; .
\end{equation}
When this latter set is not empty, there exists at least
one minimizer $\two \in \dec(\one\llower)$ of
$ \util(\one\upper,\phi(\two), \rv) - \util(\one\llower,\two, \rv) $
independent of the alea $\rv$.
Our second main result is the following. 

\begin{proposition}
Assume that
\begin{enumerate}
\item the set 
$\Phi(\one\upper,\one\llower) \defegal \{ \phi : \TWO(\one\llower) \to \TWO(\one\upper) \mid 
\dec_{\phi}(\one\upper,\one\llower) \not = \emptyset \} $ is not empty,
\item to any $\two\upper \in \TWO(\one\upper) $ 
can be associated at least one $\phi \in \Phi(\one\upper,\one\llower)$ 
and one $ \two\llower \in \dec_{\phi}(\one\upper,\one\llower) $ such that
$ \two\upper = \phi ( \two\llower ) $.
\end{enumerate}
Then there exists a subset $K$ of functions defined on 
$\RV$ such that  
$ \Lambda^-(\one\upper) = \Lambda^-(\one\llower) + K $.
Hence, the assumption of Proposition~\ref{pr:sum} is satisfied. 
\label{pr:minimizers}
\end{proposition}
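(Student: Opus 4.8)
The plan is to \emph{exhibit} the set $K$ explicitly, as the family of $\rv$-independent difference functions produced by the minimizers in~\eqref{eq:minimizers}, and then to verify the set equality~\eqref{eq:geometric_characterization} by a double inclusion. For each $\phi \in \Phi(\one\upper,\one\llower)$ and each $\two\llower \in \dec_{\phi}(\one\upper,\one\llower)$, the decision $\two\llower$ minimizes, \emph{simultaneously for every} $\rv$, the difference $\util(\one\upper,\phi(\two),\rv) - \util(\one\llower,\two,\rv)$ over $\two \in \dec(\one\llower)$, since $\dec_{\phi}(\one\upper,\one\llower)$ is the intersection over $\rv$ of the pointwise $\argmin$ sets. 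I would therefore set
\[
K \defegal \Big\{ \rv \mapsto \util(\one\upper,\phi(\two\llower),\rv) - \util(\one\llower,\two\llower,\rv) \;\Big|\; \phi \in \Phi(\one\upper,\one\llower),\; \two\llower \in \dec_{\phi}(\one\upper,\one\llower) \Big\},
\]
which is nonempty precisely by the first assumption.

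For the inclusion $\Lambda^-(\one\llower) + K \subseteq \Lambda^-(\one\upper)$, I would take $g \in \Lambda^-(\one\llower)$, dominated by $\util(\one\llower,\two,\cdot)$ for some witness $\two \in \dec(\one\llower)$, together with an element $k \in K$ coming from a pair $(\phi,\two\llower)$. The crucial use of the minimizer property is that, testing the $\argmin$ in~\eqref{eq:minimizers} against this particular $\two$, one has $k(\rv) = \util(\one\upper,\phi(\two\llower),\rv) - \util(\one\llower,\two\llower,\rv) \leq \util(\one\upper,\phi(\two),\rv) - \util(\one\llower,\two,\rv)$ for every $\rv$. Adding $g(\rv) \leq \util(\one\llower,\two,\rv)$ cancels the $\util(\one\llower,\two,\cdot)$ terms and leaves $g(\rv) + k(\rv) \leq \util(\one\upper,\phi(\two),\rv)$; since $\phi(\two) \in \dec(\one\upper)$, the sum lies in $\Lambda^-(\one\upper)$.

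For the reverse inclusion $\Lambda^-(\one\upper) \subseteq \Lambda^-(\one\llower) + K$, I would take $h \in \Lambda^-(\one\upper)$, dominated by $\util(\one\upper,\two\upper,\cdot)$ for some $\two\upper \in \dec(\one\upper)$. Here the second assumption does the work: it furnishes $\phi \in \Phi(\one\upper,\one\llower)$ and $\two\llower \in \dec_{\phi}(\one\upper,\one\llower)$ with $\two\upper = \phi(\two\llower)$. Writing $k \in K$ for the corresponding difference function, I would decompose $h = (h-k) + k$ and check that $g \defegal h - k$ belongs to $\Lambda^-(\one\llower)$: indeed $g(\rv) = h(\rv) - \util(\one\upper,\two\upper,\rv) + \util(\one\llower,\two\llower,\rv) \leq \util(\one\llower,\two\llower,\rv)$ because $h(\rv) \leq \util(\one\upper,\two\upper,\rv)$, and $\two\llower \in \dec(\one\llower)$.

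The genuinely non-routine step is guessing the right $K$ and recognizing that the $\rv$-\emph{independence} of the minimizer is exactly what allows a single function $k$ to serve in the Minkowski decomposition uniformly; once $K$ is in hand, both inclusions are short algebraic verifications, with the minimizer property powering one direction and the covering assumption~(2) powering the other. The only remaining checks are that the constructed functions land in the right $\Lambda^-$ sets and inherit the measurability and boundedness of $\util$, which are routine. I would finally note, as a sanity check, that the two hypotheses split cleanly between the two inclusions.
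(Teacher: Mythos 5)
Your proof is correct and follows essentially the same route as the paper: the same set $K$ of difference functions indexed by pairs $(\phi,\two\llower)$, the same double inclusion, with the $\argmin$ property powering $\Lambda^-(\one\llower)+K \subseteq \Lambda^-(\one\upper)$ and assumption~(2) powering the reverse. If anything, your reverse inclusion is slightly more complete than the paper's, which decomposes only the exact functions $\util(\one\upper,\two\upper,\cdot)$ and leaves the extension to dominated $h \in \Lambda^-(\one\upper)$ implicit, whereas you carry out the decomposition $h=(h-k)+k$ explicitly.
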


\begin{proof}
Let us define the set of functions
$K = \cup_{\phi \in \Phi(\one\upper,\one\llower)} K_{\phi}$, where
\begin{equation}
 K_{\phi} = \{ \rv \in \RV \mapsto \util(\one\upper,\phi(\two), \rv) - 
\util(\one\llower,\two, \rv) \mtext{ for } \two \in 
\dec_{\phi}(\one\upper,\one\llower) \} \; . 
\end{equation}

\begin{enumerate}
 \item We first show the inclusion 
$\Lambda^-(\one\upper) \supset \Lambda^-(\one\llower) + K $.
Indeed, for any $\phi \in \Phi(\one\upper,\one\llower)$ 
and $\two \in \dec(\one\llower)$,
we have by definition of $K_{\phi}$ and $\dec_{\phi}(\one\upper,\one\llower)$:
\begin{equation*}
 k(\rv) \leq 
 \util(\one\upper,\phi(\two), \rv) - \util(\one\llower,\two, \rv) 
\; , \quad \forall \rv \in \RV \; , \quad \forall k \in K_{\phi} \; .
\end{equation*}
Hence, $ k(\rv) + \util(\one\llower,\two, \rv) \leq 
 \util(\one\upper,\phi(\two), \rv) $.
Thus, $\Lambda(\one\llower) + K \subset \Lambda^-(\one\upper) $, and
therefore $\Lambda^-(\one\llower) + K \subset \Lambda^-(\one\upper) $.

 \item We now show the reverse inclusion 
$\Lambda^-(\one\upper) \subset \Lambda^-(\one\llower) + K $.
Let $ \two\upper \in \dec(\one\upper) $. 
By assumption, there exist $\phi \in \Phi(\one\upper,\one\llower)$ and 
$ \two\llower \in \dec_{\phi}(\one\upper,\one\llower) $ such that
$ \two\upper = \phi ( \two\llower ) $.
We have that
\begin{equation}
\util(\one\upper,\two\upper,\rv) = 
\util(\one\upper, \phi(\two\llower),\rv) 
= \util(\one\llower,\two\llower, \rv) + 
\underbrace{\util(\one\upper, \phi(\two\llower),\rv) -
\util(\one\llower,\two\llower, \rv)}_{k(\rv) }
\end{equation}
where the function $k$ belongs to $K_{\phi} $ since 
$ \two\llower \in \dec_{\phi}(\one\upper,\one\llower) $.
\end{enumerate}

\end{proof}

The following Corollary provides practical conditions on the utility
function \( \util \) which ensure that the assumptions of 
Propositions~\ref{pr:minimizers} and ~\ref{pr:sum} are satisfied,
hence that the first hypothesis of Proposition~\ref{pr:Epstein_functional}
is satisfied. 

What is more, the first-order condition~\eqref{eq:first-order} 
has proximities with the second-order one in
\citep*{Salanie-Treich:2007}. This opens the way for comparison between
our approach and the invariance approach of
\citeauthor*{Salanie-Treich:2007}.

\begin{corollary}
Assume that the second decision variable $\two$ belongs to 
$\TWO=\RR^n$ and that 
the minimizers in~\eqref{eq:minimizers}
are characterized by the first-order optimality condition
\begin{equation}
\phi'(\two) \frac{\partial \util}{\partial \two} (\one\upper,\phi(\two), \rv) 
- \frac{\partial \util}{\partial \two} (\one\llower,\two, \rv) = 0
\; , \quad \forall \rv \in \RV \; .
\label{eq:first-order_optimality_condition}
\end{equation}
Suppose that, to any vector $\two\upper \in \dec(\one\upper) $ 
can be associated at least one vector
$ \two\llower \in \dec(\one\llower) $ and one 
square matrix $\Matrix \in \RR^{n\times n}$ such that
\begin{equation}
\Matrix \frac{\partial \util}{\partial \two} (\one\upper,\two\upper, \rv) 
- \frac{\partial \util}{\partial \two} (\one\llower,\two\llower, \rv) = 0
\; , \quad \forall \rv \in \RV \; .
\label{eq:first-order}
\end{equation}
If, in addition, we have 
$\two\upper + \Matrix(\two-\two\llower) \in \dec(\one\upper) $
for all $\two$ in a neighbourhood of $\two\llower $ 
in $\dec(\one\llower) $,\footnote{%
This condition is meaningless if $\two\upper$ belongs to the interior of 
$\dec(\one\llower) $. Hence this condition has to be verified only when an
irreversibility constraint bites.
}
then the assumptions of Proposition~\ref{pr:minimizers} are satisfied.
\label{cor:minimizers}
\end{corollary}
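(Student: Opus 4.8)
The plan is to exhibit, for each target second decision $\two\upper \in \dec(\one\upper)$, an explicit affine map $\phi$ that simultaneously witnesses both assumptions of Proposition~\ref{pr:minimizers}. First I would fix $\two\upper \in \dec(\one\upper)$ and apply the standing hypothesis to obtain an associated $\two\llower \in \dec(\one\llower)$ together with a square matrix $\Matrix \in \RR^{n \times n}$ satisfying~\eqref{eq:first-order}. I would then define the affine map $\phi(\two) \defegal \two\upper + \Matrix(\two - \two\llower)$, so that $\phi(\two\llower) = \two\upper$ and the Jacobian $\phi'(\two) = \Matrix$ is constant; in particular $\phi'(\two\llower) = \Matrix$.

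The key step is to check that $\two\llower$ lies in $\dec_\phi(\one\upper,\one\llower)$. Evaluating the first-order optimality condition~\eqref{eq:first-order_optimality_condition} at $\two = \two\llower$ gives
\[
\phi'(\two\llower) \frac{\partial \util}{\partial \two}(\one\upper, \phi(\two\llower), \rv)
- \frac{\partial \util}{\partial \two}(\one\llower, \two\llower, \rv)
= \Matrix \frac{\partial \util}{\partial \two}(\one\upper, \two\upper, \rv)
- \frac{\partial \util}{\partial \two}(\one\llower, \two\llower, \rv),
\]
which vanishes for every $\rv \in \RV$ exactly by~\eqref{eq:first-order}. Since by assumption the minimizers in~\eqref{eq:minimizers} are characterized by this first-order condition, $\two\llower$ minimizes $\two \mapsto \util(\one\upper, \phi(\two), \rv) - \util(\one\llower, \two, \rv)$ for every $\rv$, i.e. $\two\llower \in \dec_\phi(\one\upper,\one\llower,\rv)$ for all $\rv$, and therefore $\two\llower \in \dec_\phi(\one\upper,\one\llower) = \bigcap_{\rv \in \RV} \dec_\phi(\one\upper,\one\llower,\rv)$. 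This at once shows that $\dec_\phi(\one\upper,\one\llower) \neq \emptyset$, so $\phi \in \Phi(\one\upper,\one\llower)$ and hence $\Phi(\one\upper,\one\llower)$ is nonempty (first assumption); and since $\two\upper = \phi(\two\llower)$ with $\two\llower \in \dec_\phi(\one\upper,\one\llower)$, the pair $(\phi, \two\llower)$ realizes the second assumption for the arbitrary $\two\upper$.

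The step I expect to be the main obstacle is the membership requirement implicit in the definition of $\Phi(\one\upper,\one\llower)$, namely that $\phi$ genuinely map $\dec(\one\llower)$ into $\dec(\one\upper)$, which the affine map above need not do globally. This is precisely where the additional hypothesis $\two\upper + \Matrix(\two - \two\llower) \in \dec(\one\upper)$ for all $\two$ in a neighbourhood of $\two\llower$ enters. I would argue that only the local behaviour of $\phi$ near $\two\llower$ matters: the minimizer is pinned down by the first-order condition, and the function $k(\rv) = \util(\one\upper, \two\upper, \rv) - \util(\one\llower, \two\llower, \rv)$ entering $K_\phi$ depends only on $\phi(\two\llower) = \two\upper$. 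Consequently the neighbourhood condition suffices, and, as recorded in the footnote, it is vacuous when $\two\upper$ is interior and binds only when an irreversibility constraint is active. With both assumptions of Proposition~\ref{pr:minimizers} thereby verified, its conclusion—and through it that of Proposition~\ref{pr:sum}—follows.
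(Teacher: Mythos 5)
Your proof is correct and follows essentially the same route as the paper's: take $\phi(\two) = \two\upper + \Matrix(\two-\two\llower)$, observe that the first-order condition~\eqref{eq:first-order_optimality_condition} evaluated at $\two=\two\llower$ reduces exactly to~\eqref{eq:first-order}, and conclude that $\two\llower \in \dec_{\phi}(\one\upper,\one\llower)$, so that both assumptions of Proposition~\ref{pr:minimizers} hold with $\two\upper = \phi(\two\llower)$. The only difference is in the handling of the requirement that $\phi$ map $\dec(\one\llower)$ into $\dec(\one\upper)$: the paper takes $\phi$ affine only in a neighbourhood of $\two\llower$ and ``smoothly prolongated outside,'' which is the same device you gesture at when arguing that only the local behaviour of $\phi$ near $\two\llower$ matters.
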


\begin{proof}
By assumption, to any $\two\upper \in \dec(\one\upper) $ 
can be associated at least one mapping defined by 
$\phi(\two)= \two\upper + \Matrix(\two-\two\llower)$ in a neighbourhood
of $\two\llower $ (and smoothly prolongated outside).
The  first-order optimality
condition~\eqref{eq:first-order_optimality_condition} 
attached to~\eqref{eq:minimizers} 
admits the solution $\two=\two\llower$ by~\eqref{eq:first-order}.

\end{proof}



\section{Analysis of examples in the literature}
\label{sec:Analysis of examples in the literature}

We shall examine a large body of the literature, and see that 
Corollary~\ref{cor:minimizers} applies in all cases and explains 
the precautionary effect.

At first, we shall assume that the second decision set $\dec(\one)=\TWO
$ does not depend on the 
initial decision $\one$, to concentrate on the precautionary effect and to try
to disentangle it from the irreversibility effect.
Second, we shall relax this assumption and attempt to point out the
impact that the second decision set $\dec(\one)$ indeed depends upon 
the initial decision $\one$.

In the sequel, we consider two initial decisions $\one\upper > \one\llower$.

\subsection{Additive separable preferences}

The case of additive separable preferences may be found in 
\citep*{Arrow.ea1974:environmental},
\citep*{Henry1974:investment},
\citep*{Epstein1980:decision},
\citep*{Freixas.ea1984:irreversibility},
\citep*{Fisher.ea1987:quasi},
\citep*{Hanemann1989:information} and is formalized by
\[
\util(\one,\two,\rv)= u(\one,\rv) + v(\two,\rv) \; .
\]
It can be seen that $\Lambda(\one)$ defined in \eqref{eq:Lambda} may be
written as 
\[
\Lambda(\one\upper)= \{ u(\one\upper,\rv) + 
v(\two,\rv) \, , \, \two \in \TWO \} =
u(\one\upper,\rv) - u(\one\llower,\rv) + 
\{ u(\one\llower,\rv) + v(\two,\rv) \, , \, \two \in \TWO \} 
= K + \Lambda(\one\llower)
\]
where $K$ is the singleton $\{ u(\one\upper,\rv) - u(\one\llower,\rv) \}$. 
Hence \( J(\one\upper,\rho) - J(\one\llower,\rho) =
\sigma_{u(\one\upper,\rv) - u(\one\llower,\rv)}(\rho) 
= \int_{\RV} \big( u(\one\upper,\rv) - u(\one\llower,\rv) \big) 
d\rho(\rv) \) is linear in the
prior $\rho$, hence both concave and convex. 
Thus, the precautionary effect holds true in a strong sense since the
initial optimal decision does not depend on the amount of information
the agent expects to receive. 

The above analysis is confirmed by the 
first-order optimality condition~\eqref{eq:first-order} which is 
\begin{equation*}
\Matrix \frac{\partial v}{\partial \two}(\two\upper, \rv) =
 \frac{\partial v}{\partial \two}(\two\llower, \rv)  
\; , \quad \forall \rv \in \RV \; .
\end{equation*}
Since $\Matrix = \mbox{\rm Id}_{n\times n}$ and
$ \two\llower=\two\upper $ are solutions, 
the precautionary effect holds true.

However, when $\dec(\one)$ indeed depends upon $\one$,
we no longer have that 
$\Lambda(\one\upper)= K + \Lambda(\one\llower)$.
We also observe that
the additional conditions of Corollary~\ref{cor:minimizers}
related to the irreversibility constraints, namely  
$ \dec(\one\upper) \subset \dec(\one\llower) $ and
$\two\upper \in \dec(\one\upper) $
for all $\two$ in a neighborhood of $\two\upper \in \dec(\one\llower)$,
are not generaly satisfied and this 
may prevent the precautionary effect to hold true.

\subsection{Risk neutral preferences}

Examples in \citep*{Epstein1980:decision,Ulph.ea1997:global} present the
general structure 
\[
\util(\one,\two,\rv) = u(\one,\two) + v(\one,\two) \rv 
= u(\one,\two) + \sum_{i=1}^p v_i(\one,\two) \rv_i  \; . 
\]
The first-order optimality condition~\eqref{eq:first-order} is
\begin{equation*}
\left\{ \begin{array}{rcl}
\displaystyle 
\Matrix \frac{\partial u}{\partial \two}(\one\upper,\two\upper) &=&
\displaystyle 
\frac{\partial u}{\partial \two} (\one\llower,\two\llower) \\[4mm]
\displaystyle 
\Matrix \frac{\partial v_i}{\partial \two}(\one\upper,\two\upper) &=& 
\displaystyle 
\frac{\partial v_i}{\partial \two}(\one\llower,\two\llower) 
\, , \quad i=1,\ldots, p \; .
\end{array} \right.
\end{equation*}
This is a system of $n + np $ equations with 
$n+ n^2$ unknown $(\Matrix ,\two\llower)$.
Thus, when the dimension $p$ of the noise is less than the dimension $n$  
of the second decision variable, the precautionary effect generally
holds true.

\subsection{Risk averse preferences}

We shall examine three models where preferences exhibit risk aversion.

\subsubsection*{A consumption-savings problem \citep*{Epstein1980:decision}}

A two-periods consumption-savings problem is modelled by 
\[
\util(\one,\two, \rv) = u_1(w-\one) + \beta u_2(r\one - \two) 
+ \beta^2 u_3(\two\rv) \; ,
\]
with savings $\one, \two$, and irreversibility constraint 
$\dec(\one)=[0, r \one] $. 

The first-order optimality condition~\eqref{eq:first-order} is
\[
\Matrix \beta \rv u_3'(\two\upper \rv) - \beta \rv u_3'(\two\llower \rv) =
\Matrix u_2'(r\one\upper - \two\upper) - u_2'(r\one\llower - \two\llower)
\; , \quad \forall \rv \in \RV \; .
\]
If there exists a solution  $(\Matrix ,\two\llower)$, this 
implies that there must exist constants $\alpha$, $\gamma$ and $\delta$
such that $u_3'$ satisfies an equation of the form 
\[
\rv u_3'(\alpha \rv) = \gamma \rv u_3'(\rv) + \delta 
\; , \quad \forall \rv \in \RV \; .
\]
A candidate is $u_3'(\rv)=\rv^{-\gamma}$, yielding a solution
$\Matrix =\gamma $ and $ \two\llower = \alpha \two\upper $,
with the compatibility condition 
$ \gamma u_2'(r\one\upper - \two\upper) 
- u_2'(r\one\llower - \alpha \two\upper ) + \delta = 0 $.
Hence, the precautionary effect holds true. 

\subsubsection*{Global warming and emissions 
\citep*{Gollier.ea2000:scientific} }

With pollution emissions $\one, \two$, a two-periods model with benefits
and costs of emitting pollutions is modelled by 
\[
\util(\one,\two, \rv) = 
u(\one) + v\big( \two - \rv(\one+\two) \big) \; .
\]
The first-order optimality condition~\eqref{eq:first-order} is
\begin{equation*}
\Matrix v' \big( \two\upper - \rv(\one\upper+\two\upper) \big) =
v' \big( \two\llower - \rv(\one\llower+\two\llower) \big) 
\; , \quad \forall \rv \in \RV \; .
\end{equation*}
If there exists a solution  $(\Matrix ,\two\llower)$, this 
implies that there must exist constants $\alpha$, $\beta$ and $\Matrix $
such that $v'$ satisfies an equation of the form 
\[
 v'(\alpha \rv + \beta) = \Matrix v'(\rv) 
\; , \quad \forall \rv \in \RV \; .
\]
In this case, 
$\two\llower = \two\upper \one\llower / \one\upper $.
The utility $v(x)= \frac{\gamma}{1-\gamma} 
\left[ \eta + \frac{x}{\gamma} \right]^{1-\gamma} $ in 
\citep*{Gollier.ea2000:scientific} precisely satisfies
$ v'(\alpha \rv + \gamma \eta (\alpha-1)) = \alpha^{-\gamma} v'(\rv) $,
which explains why the precautionary effect holds true. 

\subsubsection*{Eating a cake with unknown size \citep*{Eeckhoudt-Gollier-Treich:2005}}

The following model from \citep*{Eeckhoudt-Gollier-Treich:2005} 
is qualified of the problem of ``eating a cake with unknown size''
in \citep*{Salanie-Treich:2007}:
\[
\util(\one,\two, \rv) = u(\one) + v(\two) + w(\rv -\one-\two) 
\; . 
\]
The first-order optimality condition~\eqref{eq:first-order} is
\begin{equation*}
\Matrix v'(\two\upper) - v'(\two\llower) = 
\Matrix w'(\rv-(\one\upper+\two\upper)) - 
w'(\rv-(\one\llower+\two\llower)) \; .
\end{equation*}
If there exists a solution  $(\Matrix ,\two\llower)$, this 
implies that there must exist constants $\beta$, $\kappa$ and $\Matrix $
such that $w'$ satisfies an equation of
the form $w'(\rv + \beta) = \Matrix v'(\rv) + \kappa $. 
Then, we find that
$\beta + \one\upper + \two\upper = \one\llower + \two\llower $
and that
$\Matrix v'(\two\upper) - v'(\two\llower) + \kappa = 0$.
Thus, $\beta$, $\kappa$ and $\Matrix $ must satisfy some compatibility
constraint, so that the precautionary effect holds true.

\section{Conclusion}

We have provided general conditions for the precautionary effect 
to hold true, including a condition that bears directly on the primitive
utility of the economic model.
We have examined a large body of the literature, and seen how operative
is this condition to explain the precautionary effect.
Preferences yielding the precautionary effect for all signals appear
to belong to a restricted class.
This is related to the strong conditions, that we provide,
needed to have a difference of maximal payoffs exhibit convexity in the
prior. The connexion with the invariance approach of
\citep*{Salanie-Treich:2007} deserves to be studied and clarified. 

\bigskip

\textbf{Acknowledgements.} 
We want to thank Christian Gollier, François Salanié and Nicolas Treich 
for their invitation to the seminar at LERNA (Toulouse School of
Economics) on 19th January 2009, and for the subsequent discussions.


\begin{thebibliography}{24}
\providecommand{\natexlab}[1]{#1}
\providecommand{\url}[1]{\texttt{#1}}
\expandafter\ifx\csname urlstyle\endcsname\relax
  \providecommand{\doi}[1]{doi: #1}\else
  \providecommand{\doi}{doi: \begingroup \urlstyle{rm}\Url}\fi

\bibitem[Arrow and Fisher(1974)]{Arrow.ea1974:environmental}
K.~J. Arrow and A.~C. Fisher.
\newblock Environmental preservation, uncertainty, and irreversibity.
\newblock \emph{Quarterly Journal of Economics}, 88:\penalty0 312--319, 1974.

\bibitem[Artstein(1999)]{Artstein1999:gains}
Zvi Artstein.
\newblock Gains and costs of information in stochastic programming.
\newblock \emph{Annals of Operations Research}, 85:\penalty0 128--152, 1999.

\bibitem[Artstein and Wets(1993)]{Artstein-Wets:1993}
Zvi Artstein and Roger J.-B. Wets.
\newblock Sensors and information in optimization under stochastic uncertainty.
\newblock \emph{Mathematic of Operation Research}, 18\penalty0 (3), August
  1993.

\bibitem[Aubin(1982)]{Aubin:1982}
J.-P. Aubin.
\newblock \emph{Mathematical Methods of Game and Economic Theory}.
\newblock North-Holland, Amsterdam, second edition, 1982.

\bibitem[Bertsekas and Shreve(1996)]{Bertsekas-Shreve:1996}
D.~P. Bertsekas and S.~E. Shreve.
\newblock \emph{Stochastic Optimal Control: The Discrete-Time Case}.
\newblock Athena Scientific, Belmont, Massachusets, 1996.

\bibitem[{D}e {Lara} and Doyen(2008)]{DeLara-Doyen:2008}
M.~{D}e {Lara} and L.~Doyen.
\newblock \emph{Sustainable Management of Natural Resources. Mathematical
  Models and Methods}.
\newblock Springer-Verlag, Berlin, 2008.

\bibitem[{De Lara} and Gilotte(2009)]{DeLara-Gilotte:2009}
M.~{De Lara} and L.~Gilotte.
\newblock Precautionary effect and variations of the value of information.
\newblock In Jerzy Filar and Alain Haurie, editors, \emph{Handbook on
  Uncertainty and Environmental Decision Making}, International Series in
  Operations Research and Management Science. Springer Verlag, 2009.
\newblock forthcoming.

\bibitem[Dellacherie and Meyer(1975)]{Dellach}
C.~Dellacherie and P.A. Meyer.
\newblock \emph{Probabilit\'{e}s et potentiel}.
\newblock Hermann, Paris, 1975.

\bibitem[Eeckhoudt et~al.(2005)Eeckhoudt, Gollier, and
  Treich]{Eeckhoudt-Gollier-Treich:2005}
Louis Eeckhoudt, Christian Gollier, and Nicolas Treich.
\newblock Optimal consumption and the timing of the resolution of uncertainty.
\newblock \emph{European Economic Review}, 49\penalty0 (3):\penalty0 761--773,
  April 2005.

\bibitem[Epstein(1980)]{Epstein1980:decision}
L.~G. Epstein.
\newblock Decision making and temporal resolution of uncertainty.
\newblock \emph{International Economic Review}, 21:\penalty0 269--283, 1980.

\bibitem[Fisher and Hanemann(1987)]{Fisher.ea1987:quasi}
Anthony~C. Fisher and W.~M. Hanemann.
\newblock Quasi option value: Some misconceptions dispelled.
\newblock \emph{Journal of Environmental Economics and Management},
  14:\penalty0 183--190, 1987.

\bibitem[Fisher and Narain(2003)]{Fisher.ea2003:global}
Anthony~C. Fisher and Urvashi Narain.
\newblock Global warming, endogenous risk, and irreversibility.
\newblock \emph{Environmental and Resource Economics}, 25:\penalty0 395--416,
  2003.

\bibitem[Freixas and Laffont(1984)]{Freixas.ea1984:irreversibility}
Xavier Freixas and Jean-Jacques Laffont.
\newblock The irreversibility effect.
\newblock In Marcel Boyer and R.~Kihlstr{\"{o}}m, editors, \emph{Bayesian
  Models in Economic Theory}, chapter~7. North-Holland, Amsterdam, 1984.

\bibitem[Gollier et~al.(2000)Gollier, Jullien, and
  Treich]{Gollier.ea2000:scientific}
Christian Gollier, Bruno Jullien, and Nicolas Treich.
\newblock Scientific progress and irreversibility: an economic interpretation
  of the {\textquotedblleft}precautionary principle{\textquotedblright}.
\newblock \emph{Journal of Public Economics}, 75:\penalty0 229--253, 2000.

\bibitem[Hanemann(1989)]{Hanemann1989:information}
W.~M. Hanemann.
\newblock Information and the concept of option value.
\newblock \emph{Journal of Environmental Economics and Management},
  14:\penalty0 183--190, 1989.

\bibitem[Henry(1974{\natexlab{a}})]{Henry1974:investment}
C.~Henry.
\newblock Investment decisions under uncertainty: The
  {\textquotedblleft}irreversibility effect{\textquotedblright}.
\newblock \emph{American Economic Review}, 64\penalty0 (6):\penalty0
  1006--1012, 1974{\natexlab{a}}.

\bibitem[Henry(1974{\natexlab{b}})]{Henry1974:option}
C.~Henry.
\newblock Option values in the economics of irreplaceable assets.
\newblock \emph{Review of Economic Studies}, 41:\penalty0 89--104,
  1974{\natexlab{b}}.

\bibitem[{Hiriart-Urrurty} and
  Lemaréchal(1993)]{Hiriart-Urrurty-Lemarechal-I:1993}
J.~B. {Hiriart-Urrurty} and C.~Lemaréchal.
\newblock \emph{Convex Analysis and Minimization Algorithms I}.
\newblock Springer-Verlag, Berlin, 1993.

\bibitem[Jones and Ostroy(1984)]{Jones-Ostroy:1984}
Robert~A Jones and Joseph~M Ostroy.
\newblock Flexibility and uncertainty.
\newblock \emph{Review of Economic Studies}, 51\penalty0 (1):\penalty0 13--32,
  January 1984.

\bibitem[Kallenberg(2002)]{Kallenberg:2002}
O.~Kallenberg.
\newblock \emph{Foundations of Modern Probability}.
\newblock Springer-Verlag, New York, second edition, 2002.

\bibitem[Kolstad(1996)]{Kolstad1996:fundamental}
Charles~D. Kolstad.
\newblock Fundamental irreversibilities in stock externalities.
\newblock \emph{Journal of Public Economics}, 60:\penalty0 221--233, 1996.

\bibitem[Pindyck(2000)]{Pindyck2000:irreversibilities}
Robert~S. Pindyck.
\newblock Irreversibilities and the timing of environmental policy.
\newblock \emph{Resource and Energy Economics}, 22:\penalty0 233--259, 2000.

\bibitem[Salani\'e and Treich(2007)]{Salanie-Treich:2007}
F.~Salani\'e and N.~Treich.
\newblock Option value and flexibility: A general theorem with applications.
\newblock 2007.
\newblock available at http://www2.toulouse.inra.fr/lerna/treich/OV.pdf.

\bibitem[Ulph and Ulph(1997)]{Ulph.ea1997:global}
A.~Ulph and D.~Ulph.
\newblock Global warming, irreversibility and learning.
\newblock \emph{The Economic Journal}, 107\penalty0 (442):\penalty0 636--650,
  1997.

\end{thebibliography}
\end{document}